\newcommand{\pr}[1]{#1^{\prime}}
\newcommand{\pppr}[1]{#1^{\prime\prime\prime}}
\newcommand{\del}{\partial}
\newcommand{\mfrak}[1]{\mathfrak{#1}}
\newcommand{\mcal}[1]{\mathcal{#1}}
\newcommand{\mbb}[1]{\mathbb{#1}}
\newcommand{\mrm}[1]{\mathrm{#1}}
\theoremstyle{plain}
\newtheorem{thm}{Theorem}[section]
\theoremstyle{definition}
\theoremstyle{remark}
\title{Note on Schramm-Loewner evolution for superconformal algebras}
\author{Shinji Koshida}
\address{Department of Basic Science, The University of Tokyo}
\email{koshida@vortex.c.u-tokyo.ac.jp}
\begin{document}

\begin{abstract}
We propose variants of Schramm-Loewner evolution (SLE) that are related to superconformal algebras
following the group theoretical formulation of SLE,
in which the relevant stochastic differential equation is derived from a random process
on an infinite dimensional Lie group.
In this paper, we consider random processes on certain kind of groups of superconformal transformations
generated by exponentiated elements of the Grassmann envelop of superconformal algebras.
We also provide a prescription of obtaining local martingales from a representation of
the superconformal algebra after integration by Grassmann variables.

\end{abstract}

\maketitle

\section{Introduction}
The interplay between Schramm-Loewner evolution (SLE) \cite{Schramm2000} and 
conformal field theory (CFT) \cite{BelavinPolyakovZamolodchikov1984} in two dimensions has been explored from various points of view,
and deep connections between them have now been established with the name of SLE/CFT correspondence.
In the most na\"{i}ve sense, SLE/CFT correspondence allows one to compute local martingales associated with SLE
from a degenerate representation of the Virasoro algebra due to the correspondence between
the stochastic differential equation for a random process on the group of conformal transformations
that generates SLE and the singular vector in the representation \cite{BauerBernard2003a}.
The essence of this explanation is to identify SLE, which describes deformation of simply connected domains,
with a random process on an infinite dimensional Lie group that acts on a representation of the Virasoro algebra.
This framework is often called the group theoretical formulation of SLE.
In a slightly different approach to SLE/CFT correspondence, one regards SLE as a random process on a moduli space of Riemann surfaces
and finds the probability measure be a section of the determinant bundle of the moduli space \cite{FriedrichKalkkinen2004,Kontsevich2003}.
This approach can be locally seen as the group theoretical formulation via the Virasoro uniformization,
but further allows one to consider a generalization of SLE on other Riemann surfaces than a simply connected one.
The construction of SLE measure on a path space was carried in \cite{KontsevichSuhov2007} and \cite{Dubedat2015a,Dubedat2015b} in slightly different formulations,
and it was found that the partition function is a section of a line bundle on a Teichm\"{u}ller space \cite{Dubedat2015a,Dubedat2015b}.

Following SLE/CFT correspondence, several generalizations of SLE corresponding to other CFTs than
ones associated with the Virasoro algebra have been proposed,
examples of which include multiple SLE \cite{BauerBernardKytola2005}
and SLE with internal degrees of freedom \cite{BettelheimGruzbergLudwigWiegmann2005,AlekseevBytskoIzyurov2011, SK2017, SK2018a}
that corresponds to Wess-Zumino-Witten theory and its super analogue \cite{SK2018b}.
A generalized SLE we study in this paper is one on a super Riemann surface, once considered in \cite{Rasmussen2004a,NagiRasmussen2005} for $\mcal{N}=1$ case.
In their paper, the authors proposed a random process associated with the most simple nontrivial singular vector
in a representation of the $\mcal{N}=1$ superconformal algebra, which is an odd one.
As is mentioned above, SLE/CFT correspondence should allow one to obtain local martingales from a representation.
In case that supersymmetry is involved, the the stochastic differential equation for a random process on a Lie group
does not necessarily correspond to a singular vector in a representation due to Grassmann variables,
which prevents one from consideridering even singular vectors.
We address this issue by proposing a prescription of obtaining local martingales after a certain integral by Grassmann variables
following our previous work \cite{SK2018b}, and construct a generalized SLE on a super Riemann surface.

This paper is organized as follows.
The next Sect.\ref{sect:group_theor} gives a brief overview of the group theoretical formulation of SLE
corresponding to the Virasoro algebra.
Sect.\ref{sect:superconformal} serves as a preliminary part on the theory of superconformal transformation and super Riemann surfaces.
In Sect.\ref{sect:N=1}, we extend the formulation in Sect.\ref{sect:group_theor} to
the Neveu-Schwarz sector of the $\mcal{N}=1$ superconformal algebra,
and derive stochastic differential equations that can be interpreted as generalization of SLE.
We also present a way of obtaining local martingales from a representation of
the superconformal algebra after integrating Grassmann variables.
In Sect.\ref{sect:N=2}, we generalize the content of Sect.\ref{sect:N=1} to the case of $\mcal{N}=2$.
In Sect.\ref{sect:discussion}, we make some discussion on our result and future directions.

\section{Group theoretical formulation of SLE}
\label{sect:group_theor}

In this section, we recall the group theoretical formulation of SLE presented in \cite{BauerBernard2003a}.
SLE is a one-parameter family $\{g_{t}(z)\in \mbb{C}[[z^{-1}]]z\}_{t \ge 0}$ of formal power series
that satisfies the stochastic differential equation
\begin{equation}
	\frac{d}{dt}g_{t}(z)=\frac{2}{g_{t}-\sqrt{\kappa}B_{t}},\ \ \ g_{0}(z)=z.
\end{equation}
Here $B_{t}$ is the standard Brownian motion on $\mbb{R}$ that start from the origin
and $\kappa>0$ is a parameter.
SLE specified by the parameter $\kappa$ is often denoted by SLE$(\kappa)$.
The formal power series at each time $t$ becomes a uniformization map of a simply connected domain.
Namely, there exists a subset $K_{t}$, called a hull, of the upper half plane $\mbb{H}=\{z\in\mbb{C}|\mrm{Im}z>0\}$ so that
$g_{t}$ is a biholomorphic function $g_{t}:\mbb{H}\backslash K_{t}\to \mbb{H}$.
The evolution of hulls $\{K_{t}\}_{t\ge 0}$ is known to be increasing, {\it i.e.},
$t<s$ implies $K_{t}\subset K_{s}$.
Thus SLE describes a growth process of hulls in the upper half plane,
which is closely related to cluster interfaces in a two dimensional critical system.

To understand SLE in the group theoretical formulation, it is convenient to use
an alternative presentation of SLE.
Let us set $f_{t}(z)=g_{t}(z)-\sqrt{\kappa}B_{t}$.
Then we have the following stochastic differential equation
\begin{equation}
	\label{eq:SLE_alter}
	df_{t}(z)=\frac{2dt}{f_{t}(z)}-\sqrt{\kappa}dB_{t},\ \ \ f_{0}(z)=z,
\end{equation}
which can be connected to representation theory of the Virasoro algebra.

The first step of the group theoretical formulation is
to interpret SLE in Eq.(\ref{eq:SLE_alter}) as formal coordinate transformation at infinity.
Following the terminology of \cite{FrenkelBen-Zvi2004}, let $\mcal{O}=\mbb{C}[[w]]$ be a completed topological $\mbb{C}$-algebra
\footnote{
More intrinsically, for a point $X$ of a Riemann surface,
one has the stalk $\mcal{O}_{X}$ of the structure sheaf at $X$ and its maximal ideal $\mfrak{m}_{X}$.
Then our algebra $\mcal{O}$ is constructed as $\mcal{O}=\varprojlim_{n} \mcal{O}_{X}/\mfrak{m}_{X}^{n}$.
},
and $\mrm{Aut}\mcal{O}$ be a group of continuous automorphisms of $\mcal{O}$.
An automorphism $\rho\in \mrm{Aut}\mcal{O}$ is identified with its image of the topological generator $w$:
\begin{equation}
	\rho(w)=a_{1}w+a_{2}w^{2}+\cdots.	
\end{equation}
We regard the formal disc $D=\mrm{Spec}\mcal{O}$ as the formal neighborhood at the infinity.
Then the coodinate $z$ at the origin is transformed by the same transformation $\rho$ as
\begin{equation}
	z\mapsto \frac{1}{\rho(1/z)}=b_{1}z+b_{0}+b_{-1}z^{-1}+\cdots.
\end{equation}
Thus we can see that the group $\mrm{Aut}\mcal{O}$ is identified with a set of formal power series
\begin{equation}
	\label{eq:Aut_series}
	\mrm{Aut}\mcal{O}\simeq \{\rho(z)=b_{1}z+b_{0}+b_{-1}z^{-1}+\cdots|b_{1}\neq 0\}.
\end{equation}
A significant subgroup in our context denoted by $\mrm{Aut}_{+}\mcal{O}$ is defined by 
adding the condition that $b_{1}=1$.

The Lie algebra of the group $\mrm{Aut}\mcal{O}$ consists of vector fields
holomorphic at the infinity, which is realized as a subalgebra $\mrm{Der}_{0}\mcal{O}=z\mbb{C}[[z^{-1}]]\del_{z}$
of the Witt algebra.
Similarly, the Lie algebra of the subgroup $\mrm{Aut}_{+}\mcal{O}$ is identified with
$\mrm{Der}_{+}\mcal{O}=\mbb{C}[[z^{-1}]]\del_{z}$.
The exponential map $\mrm{Der}_{0}\to\mrm{Aut}\mcal{O}$ is presented in the following way.
For an automorphism $\rho\in\mrm{Aut}\mcal{O}$, we can uniquely find numbers $v_{i}$ so that
\begin{equation}
	\label{eq:exp_map}
	\exp\left( \sum_{i<0}v_{i}z^{i+1}\del_{z}\right) v_{0}^{z\del_{z}}\cdot z=\rho(z).
\end{equation}

From the normalization of $f_{t}(z)$ at the infinity, SLE in Eq.(\ref{eq:SLE_alter}) can be regarded
as a random process on the infinite dimensional Lie group $\mrm{Aut}_{+}\mcal{O}$ under the identification
in Eq.(\ref{eq:Aut_series}).
It can be also verified that Eq.(\ref{eq:SLE_alter}) for $f_{t}(z)$ is equivalent to
the following stochastic differential equation for $f_{t}$ considered as a random process on
$\mrm{Aut}_{+}\mcal{O}$:
\begin{equation}
	\label{eq:SLE_vector_field}
	f_{t}^{-1}df_{t}=\left(-2\ell_{-2}+\frac{\kappa}{2}\ell_{-1}^{2}\right)dt+\sqrt{\kappa}\ell_{-1}dB_{t},
\end{equation}
where we set $\ell_{n}=-z^{n+1}\del_{z}$.

We next construct a representation of the group $\mrm{Aut}\mcal{O}$ on the formal completion of
a representation of the Virasoro algebra.
The Virasoro algebra $\mrm{Vir}$ is an infinite dimensional Lie algebra
$\mrm{Vir}=\bigoplus_{n\in\mbb{Z}}\mbb{C}L_{n}\oplus\mbb{C}C$ with Lie bracket
\begin{equation}
	[L_{m},L_{n}]=(m-n)L_{m+n}+\frac{m^{3}-m}{12}\delta_{m+n,0}C,\ \ \ [C,\mrm{Vir}]=\{0\}.
\end{equation}
Its highest weight representations are classified by central charge $c$ and conformal weight $h$.
Namely, the highest weight vector denoted by $\ket{c,h}$ is an eigenvector of $C$ and $L_{0}$
with eigenvalues $c$ and $h$, respectively, and annihilated by $L_{n}$ for $n>0$.
Then the corresponding irreducible highest weight representation $L(c,h)$
is the irreducible quotient of the Verma module induced from $\ket{c,h}$.
Each irreducible representation $L(c,h)$ decomposes into direct sum of 
eigenspaces of $L_{0}$ so that $L(c,h)=\bigoplus_{n=0}^{\infty}L(c,h)_{h+n}$,
where $L(c,h)_{\Delta}=\{v\in L(c,h)|L_{0}v=\Delta v\}$ is the eigenspace
of $L_{0}$ corresponding to an eigenvalue $\Delta$.
Then the formal completion of $L(c,h)$ is
$\overline{L(c,h)}=\prod_{n=0}^{\infty}L(c,h)_{h+n}$.

We firstly equip the space $\overline{L(c,h)}$ for an integer $h$ with an action of the group $\mrm{Aut}\mcal{O}$.
Under the identification in Eq.(\ref{eq:Aut_series}), we find numbers $v_{i}$ for $i\le 0$ for an automorphism $\rho$
so that Eq.(\ref{eq:exp_map}) holds.
Then the operator
\begin{equation}
	Q(\rho)=\exp\left(-\sum_{i<0}v_{i}L_{i}\right)v_{0}^{-L_{0}}
\end{equation}
defines a representation of $\mrm{Aut}\mcal{O}$ on $\overline{L(c,h)}$.
Here $v_{0}^{-L_{0}}$ acts as multiplication by $v_{0}^{-\Delta}$ on each eigenspace $L(c,h)_{\Delta}$,
which is well-defined if $h$ is an integer.
If $h$ is not an integer, the whole group $\mrm{Aut}\mcal{O}$ cannot act, 
but the subgroup $\mrm{Aut}_{+}\mcal{O}$ can since $v_{0}=1$ for an automorphism in this subgroup
and we do not encounter the branch issue for the part $v_{0}^{-L_{0}}$.

As we have seen above, SLE as a random process on $\mrm{Aut}_{+}\mcal{O}$ is characterized by Eq.(\ref{eq:SLE_vector_field}).
Combining the representation $Q$ of $\mrm{Aut}_{+}\mcal{O}$,
we have an operator-valued random process $Q(f_{t})$,
which satisfies the stochastic differential equation
\begin{equation}
	Q(f_{t})^{-1}dQ(f_{t})=\left(-2L_{-2}+\frac{\kappa}{2}L_{-1}^{2}\right)dt +\sqrt{\kappa}L_{-1}dB_{t}.
\end{equation}
Notice that for a given parameter $\kappa$, we have
$\left(-2L_{-2}+\frac{\kappa}{2}L_{-1}^{2}\right)\ket{c_{\kappa},h_{\kappa}}=0$
for $c_{\kappa}=1-\frac{3(\kappa-4)^{2}}{2\kappa}$ and $h_{\kappa}=\frac{6-\kappa}{2\kappa}$.
This implies that the representation-space-valued random process $Q(f_{t})\ket{c_{\kappa},h_{\kappa}}$ is a local martingale,
which generates infinitely many local martingales associated with SLE.
An example of such obtained local martingales is
\begin{equation}
	\braket{c_{\kappa},h_{\kappa}|L(z)Q(f_{t})|c_{\kappa},h_{\kappa}}
	=h_{\kappa}\left(\frac{\pr{f}_{t}(z)}{f_{t}(z)}\right)^{2}+\frac{c_{\kappa}}{12}(Sf_{t})(z),
\end{equation}
where $L(z)=\sum_{n\in\mbb{Z}}L_{n}z^{-n-2}$ is the Virasoro field (the stress-energy tensor, in other words)
and $(S\rho)(z)=\frac{\pppr{\rho}(z)}{\pr{\rho}(z)}-\frac{3}{2}\left(\frac{\pr{\rho}(z)}{\rho(z)}\right)^{2}$
is the Schwarzian derivative.
While the quantity in the right hand side is checked to be a local martingale
by a direct calculation, the left hand side reveals its CFT origin.

\section{Superanalytic functions and superconformal maps}
\label{sect:superconformal}
In this section, we recall the notion of superanalytic functions and superconformal maps
following literatures \cite{Barron1996,Barron2003,Barron2007}
that is needed in the construction of  SLE with supersymmetry.

\subsection{Superanalytic superfunction}
Let $\bigwedge_{N}=\bigwedge [\zeta_{1},\cdots,\zeta_{N}]$ be the Grassmann algebra over the field of complex numbers $\mbb{C}$
that is generated by $N$ variables $\zeta_{1},\cdots,\zeta_{N}$.
We have a natural inclusion $\bigwedge_{N}\hookrightarrow \bigwedge_{M}$ for $N<M$
and the direct limit of this inductive system is denoted by $\bigwedge_{\infty}$.
We denote a Grassmann algebra by $\bigwedge_{\ast}$ avoiding to specify the number of generators $N$ or $\infty$.
We introduce sets of indices $I_{\ast}=\{(i)=(i_{1}<\cdots<i_{2n})|i_{k}=1,2,\cdots, k=1,\cdots,2n, n=1,2,\cdots\}\cup \{(\emptyset)\}$,
$J_{\ast}=\{(j)=(j_{1}<\cdots<j_{2n-1})|j_{k}=1,2,\cdots, k=1,\cdots,2n-1, n=1,2,\cdots\}$ and $K_{\ast}=I_{\ast}\cup J_{\ast}$.
For each $(i)=(i_{1}<\cdots<i_{n})\in K_{\ast}$, we set $\zeta_{(i)}=\zeta_{i_{1}}\cdots\zeta_{i_{n}}$
and $\zeta_{(\emptyset)}=1$.
The Grassmann algebra $\bigwedge_{\ast}$ naturally admits a $\mbb{Z}_{2}$-gradation specified by
$\bigwedge_{\ast}^{0}=\mrm{Span}\{\zeta_{(i)}|(i)\in I_{\ast}\}$ and
$\bigwedge_{\ast}^{1}=\mrm{Span}\{\zeta_{(j)}|(j)\in J_{\ast}\}$.
We also introduce another direct sum decomposition $\bigwedge_{\ast}=(\bigwedge_{\ast})_{\mrm{B}}\oplus(\bigwedge_{\ast})_{\mrm{S}}$,
where $(\bigwedge_{\ast})_{\mrm{B}}=\mbb{C}\zeta_{(\emptyset)}\simeq \mbb{C}$ (body) and
$(\bigwedge_{\ast})_{\mrm{S}}=\mrm{Span}\{\zeta_{(i)}|(i)\in K_{\ast}\backslash \{(\emptyset)\}\}$ (soul).
Each element $a\in \bigwedge_{\ast}$ is written as $a=a_{\mrm{B}}+a_{\mrm{S}}$ along this direct sum decomposition
where $a_{\mrm{B}}\in(\bigwedge_{\ast})_{\mrm{B}}$ and $a_{\mrm{S}}\in(\bigwedge_{\ast})_{\mrm{S}}$.

Let $f$ be an analytic function on a domain in $\mbb{C}^{m}$.
Then for an $m$-tuple $(z_{1},\cdots,z_{m})\in(\bigwedge_{\ast}^{0})^{m}$ of even elements, we define
\begin{equation}
	f(z_{1},\cdots,z_{m})=\sum_{\ell_{1},\cdots,\ell_{m}=0}^{\infty}\frac{(z_{1})_{\mrm{S}}^{\ell_{1}}\cdots (z_{m})_{\mrm{S}}^{\ell_{m}}}{\ell_{1}!\cdots \ell_{m}!}
		\frac{\del^{\ell_{1}+\cdots+\ell_{m}}}{\del (z_{1})_{\mrm{B}}^{\ell_{1}}\cdots \del (z_{m})_{\mrm{B}}^{\ell_{m}}}f((z_{1})_{\mrm{B}},\cdots,(z_{m})_{\mrm{B}}).
\end{equation}
The right hand side reduces to a finite sum, thus this quantity is well-defined if the body $((z_{1})_{\mrm{B}},\cdots,(z_{m})_{\mrm{B}})$
lies in the domain of $f$.
Note also that $f(z_{1},\cdots,z_{m})\in \bigwedge_{\ast}^{0}$ if it makes sense.

We shall define the notion of a superanalytic function on $(\bigwedge_{\ast}^{0})^{m}\oplus (\bigwedge_{\ast}^{1})^{n}$.
Let $\pi_{\mrm{B}}^{(m,n)}:(\bigwedge_{\ast}^{0})^{m}\oplus (\bigwedge_{\ast}^{1})^{n}\to \mbb{C}^{m}$ be the projection taking the body of even parts.
Then we equip $(\bigwedge_{\ast}^{0})^{m}\oplus (\bigwedge_{\ast}^{1})^{n}$ with the pull-back topology of the natural topology on $\mbb{C}^{m}$
via the projection $\pi_{\mrm{B}}^{(m,n)}$, which is called the DeWitt topology.
A function $H$ on an open set $U\subset (\bigwedge_{\ast}^{0})^{m}\oplus (\bigwedge_{\ast}^{1})^{n}$ with values in $\bigwedge_{\ast}$ is said to be
a superanalytic function in $(m,n)$-variables if it has the form
\begin{equation}
	H(z_{1},\cdots,z_{m},\theta_{1},\cdots,\theta_{n})=\sum_{(\ell)\in K_{n}}\theta_{\ell_{1}}\cdots\theta_{\ell_{j}}f_{(\ell)}(z_{1},\cdots,z_{m}),
\end{equation}
where $f_{(\ell)}$ are
\begin{equation}
	f_{(\ell)}(z_{1},\cdots,z_{m})=\sum_{(k)\in K_{\ast}}f_{(\ell),(k)}(z_{1},\cdots,z_{m})\zeta_{(k)}
\end{equation}
with analytic functions $f_{(\ell),(k)}$.
A superanalytic function $H$ is said to be even (resp. odd) if it takes values in the even (resp. odd) part of the Grassmann algebra.

The left partial derivatives by $z_{i}$ and $\theta_{i}$ acting on a superanalytic function $H$ are defined by
\begin{align}
	&\delta z_{i}\frac{\del}{\del z_{i}}H(z_{1},\cdots,z_{m},\theta_{1},\cdots,\theta_{n})+O((\delta z_{i})^{2}) \\
	&=H(z_{1},\cdots,z_{i}+\delta z_{i},\cdots,z_{m},\theta_{1},\cdots,\theta_{n})-H(z_{1},\cdots,z_{m},\theta_{1},\cdots,\theta_{n}) \notag
\end{align}
for arbitrary $\delta z_{i}\in\bigwedge_{\ast}^{0}$ and
\begin{align}
	&\delta \theta_{i}\frac{\del}{\del \theta_{i}}H(z_{1},\cdots,z_{m},\theta_{1},\cdots,\theta_{n})+O((\delta \theta_{i})^{2}) \\
	&=H(z_{1},\cdots,z_{m},\theta_{1},\cdots,\theta_{i}+\delta\theta_{i},\cdots,\theta_{n})-H(z_{1},\cdots,z_{m},\theta_{1},\cdots,\theta_{n}) \notag
\end{align}
for arbitrary $\delta \theta_{i}\in\bigwedge_{\ast}^{1}$.
Then $\del/\del z_{i}$ and $\del/\del\theta_{i}$ define an even and odd operation, respectively.

\subsection{Superconformal map of $\mcal{N}=1$}
We consider in this subsection the case that $(m,n)=(1,1)$.
Let $H$ be a superanalytic map $H:\bigwedge_{\ast}^{0}\oplus \bigwedge_{\ast}^{1}\to \bigwedge_{\ast}^{0}\oplus \bigwedge_{\ast}^{1}$
that sends $(z,\theta)\mapsto (\tilde{z}=H^{0}(z,\theta),\tilde{\theta}=H^{1}(z,\theta))$, {\it i.e.},
it is a pair $(H^{0}, H^{1})$ of an even superanalytic function $H^{0}$ and an odd superanalytic one $H^{1}$.
We introduce an odd derivative $D=\frac{\del}{\del \theta}+\theta\frac{\del}{\del z}$,
which is an square root of an even derivative in the sense that $D^{2}=\frac{\del}{\del z}$.
It can be verified that this odd derivative transforms under $H$ so that
\begin{equation}
	D=(D\tilde{\theta})\tilde{D}+(D\tilde{z}-\tilde{\theta}D\tilde{\theta})\frac{\del}{\del \tilde{z}}.
\end{equation}
We say that $H$ is superconformal if $D$ transforms homogeneously of degree one under $H$,
which is equivalent to the condition that $D\tilde{z}=\tilde{\theta}D\tilde{\theta}$.

To our purpose, it will be convenient to consider $z$ and $\theta$ as formal variables, even and odd, respectively,
and a superconformal map $H$ in a Laurent expanded form:
\begin{equation}
	H(z,\theta)=(f(z)+\theta \xi(z),\psi(z)+\theta g(z))\in ({\textstyle \bigwedge_{\ast}}[[z^{\pm}]][\theta])^{0}\oplus ({\textstyle \bigwedge_{\ast}}[[z^{\pm}]][\theta])^{1},
\end{equation}
where $f(z), g(z)\in \bigwedge_{\ast}^{0}[[z^{\pm}]]$ and $\xi(z),\psi(z)\in \bigwedge_{\ast}^{1}[[z^{\pm}]]$.
In these coordinates, we introduce vector fields
\begin{align}
	\mcal{L}^{(1)}_{j}&=-z^{j+1}\frac{\del}{\del z}-\left(\frac{j+1}{2}\right)\theta z^{j}\frac{\del}{\del\theta}, \\
	\mcal{G}_{j+\frac{1}{2}}&=-z^{j+1}\left(\frac{\del}{\del\theta}-\theta\frac{\del}{\del z}\right),
\end{align}
for $j\in\mbb{Z}$, which form a basis of an infinite dimensional Lie superalgebra denoted by $\mfrak{ns}_{1}^{0}$.
Here the subscript $1$ and the superscript $0$ express $\mcal{N}=1$ and the central charge $c=0$, respectively.
For a sequence $A_{j}\in \bigwedge_{\ast}^{0}$, $M_{j+\frac{1}{2}}\in\bigwedge_{\ast}^{1}$ for $j\in \mbb{Z}_{<0}$,
we define an operator $E_{A,M}$ on $(\bigwedge_{\ast}z[[z^{-1}]][\theta])^{0}\oplus (\bigwedge_{\ast}z[[z^{-1}]][\theta])^{1}$by
\begin{equation}
	\label{eq:N=1superconformal_vector_field}
	E_{A,M}=\exp\left(-\sum_{j\in\mbb{Z}_{<0}}\left(A_{j}\mcal{L}^{(1)}_{j}+M_{j+\frac{1}{2}}\mcal{G}_{j+\frac{1}{2}}\right)\right)
\end{equation}
Then this operator defines a formal superconformal map.
\begin{thm}
\label{thm:exp_superconformal_N=1}
The formal power series given by
\begin{equation}
	H(z,\theta)=(\tilde{z},\tilde{\theta})=E_{A,M}\cdot(z,\theta)
\end{equation}
defines a formal superconformal map, {\it i.e.}, it satisfies $D\tilde{z}=\tilde{\theta}D\tilde{\theta}$.
Here the operator $E_{A,M}$ acts on the variables componentwisely in the right hand side.
\end{thm}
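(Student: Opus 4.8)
The plan is to realize $E_{A,M}$ as the time-one map of a one-parameter flow and to show that the superconformal condition is preserved along the whole flow. Write $X=\sum_{j\in\mbb{Z}_{<0}}\left(A_{j}\mcal{L}^{(1)}_{j}+M_{j+\frac{1}{2}}\mcal{G}_{j+\frac{1}{2}}\right)$, so that $E_{A,M}=\exp(-X)$. Since $A_{j}\in\bigwedge_{\ast}^{0}$ and $\mcal{L}^{(1)}_{j}$ is even while $M_{j+\frac{1}{2}}\in\bigwedge_{\ast}^{1}$ and $\mcal{G}_{j+\frac{1}{2}}$ is odd, each summand, and hence $X$ itself, is an \emph{even} derivation of the algebra of formal superfunctions. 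Set $(\tilde{z}_{t},\tilde{\theta}_{t})=\exp(-tX)\cdot(z,\theta)$ and introduce the superconformal defect
\begin{equation}
	\Phi_{t}=D\tilde{z}_{t}-\tilde{\theta}_{t}D\tilde{\theta}_{t}.
\end{equation}
At $t=0$ one has $\tilde{z}_{0}=z$ and $\tilde{\theta}_{0}=\theta$, and since $Dz=\theta=\theta\cdot D\theta$ we get $\Phi_{0}=0$. The theorem is the assertion $\Phi_{1}=0$, so it suffices to prove $\Phi_{t}\equiv 0$; I would do this by showing that $\Phi_{t}$ obeys a homogeneous linear evolution equation in $t$ and invoking uniqueness, expanded order by order in the formal parameter $t$.

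The key step is the \emph{infinitesimal superconformality} of the generators. A short computation of the action on the coordinate functions $z$ and $\theta$ (both sides being first-order operators, this determines them) gives
\begin{align}
	[\mcal{L}^{(1)}_{j},D]&=\tfrac{j+1}{2}z^{j}D, \\
	\{\mcal{G}_{j+\frac{1}{2}},D\}&=-(j+1)z^{j}\theta\, D,
\end{align}
where one uses the ordinary commutator for the even $\mcal{L}^{(1)}_{j}$ and the anticommutator for the odd $\mcal{G}_{j+\frac{1}{2}}$. Passing to the Grassmann envelope, $A_{j}$ commutes with $D$ while left multiplication by the odd constant $M_{j+\frac{1}{2}}$ anticommutes with $D$, so that $[M_{j+\frac{1}{2}}\mcal{G}_{j+\frac{1}{2}},D]=M_{j+\frac{1}{2}}\{\mcal{G}_{j+\frac{1}{2}},D\}$; summing over $j$ yields
\begin{equation}
	[X,D]=\Lambda D,\quad \Lambda=\sum_{j\in\mbb{Z}_{<0}}\left(\tfrac{j+1}{2}A_{j}z^{j}-(j+1)M_{j+\frac{1}{2}}z^{j}\theta\right),
\end{equation}
and $\Lambda$ is even, as it must be since $X$ is even and $D$ is odd.

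With this lemma the evolution equation closes. From $\tfrac{d}{dt}\tilde{z}_{t}=-X\tilde{z}_{t}$, $\tfrac{d}{dt}\tilde{\theta}_{t}=-X\tilde{\theta}_{t}$, and the graded Leibniz rule for the even derivation $X$ (noting that $D$ commutes with $\tfrac{d}{dt}$), differentiating gives
\begin{equation}
	\frac{d}{dt}\Phi_{t}=-D(X\tilde{z}_{t})+(X\tilde{\theta}_{t})(D\tilde{\theta}_{t})+\tilde{\theta}_{t}D(X\tilde{\theta}_{t}).
\end{equation}
Rewriting $DX=XD-\Lambda D$ from the lemma, the terms $(X\tilde{\theta}_{t})(D\tilde{\theta}_{t})+\tilde{\theta}_{t}X(D\tilde{\theta}_{t})$ reassemble into $X(\tilde{\theta}_{t}D\tilde{\theta}_{t})$, and collecting everything one obtains
\begin{equation}
	\frac{d}{dt}\Phi_{t}=(\Lambda-X)\Phi_{t}.
\end{equation}
This is a homogeneous linear equation with fixed coefficient operator and initial data $\Phi_{0}=0$; writing $\Phi_{t}=\sum_{n}t^{n}\phi_{n}$ gives $(n+1)\phi_{n+1}=(\Lambda-X)\phi_{n}$, so $\phi_{0}=0$ forces all $\phi_{n}=0$ by induction, whence $\Phi_{t}\equiv 0$ and in particular $\Phi_{1}=0$.

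I expect the main obstacle to be the sign bookkeeping in the graded setting rather than any conceptual difficulty: one must track that the bracket with $D$ is a commutator for the even $\mcal{L}^{(1)}_{j}$ but an anticommutator for the odd $\mcal{G}_{j+\frac{1}{2}}$, that the Koszul signs in the Leibniz rule applied to $\tilde{\theta}_{t}D\tilde{\theta}_{t}$ are correct, and that left multiplication by the odd constants $M_{j+\frac{1}{2}}$ anticommutes with $D$, since it is exactly the cancellation of these signs that lets the right-hand side collapse to $(\Lambda-X)\Phi_{t}$. A secondary point is to confirm that $\exp(-tX)z$ and $\exp(-tX)\theta$ are well-defined elements of $\bigwedge_{\ast}z[[z^{-1}]][\theta]$, which holds because each generator strictly lowers the $z$-degree, so every coefficient of a fixed power of $z$ receives only finitely many contributions.
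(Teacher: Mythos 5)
Your proof is correct, and it takes a genuinely different route from the paper's. Both arguments rest on the same infinitesimal lemma: your $[X,D]=\Lambda D$ is literally the paper's $[D,T]=h(z,\theta)D$ with $T=-X$ and $h=\Lambda$ (the two expressions agree because $\theta M_{j+\frac{1}{2}}=-M_{j+\frac{1}{2}}\theta$). The difference is in how the infinitesimal statement is integrated. The paper exponentiates the commutation relation into the operator identity $e^{T+h}D=De^{T}$, combines it with the twisted Leibniz property $e^{T+h}(fg)=(e^{T}f)(e^{T+h}g)$ of the operator ``derivation plus multiplication'', and then gets $D\tilde{z}=\tilde{\theta}D\tilde{\theta}$ from a one-line chain of equalities; both exponential identities are asserted without proof and would themselves require inductions of the kind you carry out. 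You instead run the flow $e^{-tX}$, track the defect $\Phi_{t}=D\tilde{z}_{t}-\tilde{\theta}_{t}D\tilde{\theta}_{t}$, close the homogeneous linear equation $\frac{d}{dt}\Phi_{t}=(\Lambda-X)\Phi_{t}$ using only the derivation property of $X$ and the lemma, and conclude by uniqueness for formal power series in $t$. Your version is more self-contained (no unproved operator identities, no multiplicative property of $e^{T+h}$ needed) and transfers verbatim to the $\mcal{N}=2$ case, which the paper handles by saying it is ``proved in the same way''; the paper's version is shorter and yields the explicit closed forms $D\tilde{z}=e^{T+h}\theta$ and $D\tilde{\theta}=e^{T+h}1$ as byproducts.

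One peripheral inaccuracy: in your closing remark on well-definedness, the claim that ``each generator strictly lowers the $z$-degree'' fails for $\mcal{G}_{-\frac{1}{2}}$, since $\mcal{G}_{-\frac{1}{2}}(\theta z^{m})=-z^{m}$ preserves the degree. Well-definedness still holds: a degree-preserving step necessarily acts on a $\theta$-dependent (odd) term and produces a $\theta$-free (even) one, and since the $\mcal{L}$-terms preserve $\theta$-parity while the $\mcal{G}$-terms flip it and lower degree when acting on even terms, two degree-preserving steps must be separated by a degree-lowering one (alternatively, consecutive applications of $M_{-\frac{1}{2}}\mcal{G}_{-\frac{1}{2}}$ vanish because $M_{-\frac{1}{2}}^{2}=0$). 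So a fixed power of $z$ still receives only finitely many contributions, but by this alternation argument, not by strict degree-lowering. This point lies outside the theorem's proof proper, since the paper tacitly assumes $E_{A,M}$ is well-defined when introducing it.
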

\begin{proof}
We set
\begin{equation}
	T=-\sum_{j\in\mbb{Z}_{<0}}\left(A_{j}\mcal{L}^{(1)}_{j}+M_{j+\frac{1}{2}}\mcal{G}_{j+\frac{1}{2}}\right).
\end{equation}
Then we have $[D,T]=h(z,\theta)D$
with
\begin{equation}
	h(z,\theta)=\sum_{j\in\mbb{Z}_{<0}}\left(A_{j}\left(\frac{j+1}{2}\right)z^{j}+\theta M_{j+\frac{1}{2}}(j+1)z^{j}\right),
\end{equation}
which implies $e^{T+h(z,\theta)}D=De^{T}$.
We also have $e^{T+h(z,\theta)}\theta\cdot 1=(e^{T}\theta)(e^{T+h(z,\theta)}1)$.
Thus
\begin{equation}
	D\tilde{z}=e^{T+h(z,\theta)}\theta=(e^{T}\theta)(e^{T+h(z,\theta)}D\theta)=(e^{T}\theta)De^{T}\theta=\tilde{\theta}D\tilde{\theta}.
\end{equation}
\end{proof}

We denote the group generated by operators $E_{A,M}$ for various $A$, $M$ by $\mrm{SC}_{+}^{\mcal{N}=1}$,
which is the analogous object to $\mrm{Aut}_{+}\mcal{O}$ in Sect.\ref{sect:group_theor} in
case that $\mcal{N}=1$ supersymmetry is involved.

\subsection{Superconformal map of $\mcal{N}=2$}
The notion of superconformal map of $\mcal{N}=1$ recalled above can be naturally extended to the $\mcal{N}=2$ case.
Let $H$ be a superanalytic map of $(1,2)$-variables $H:\bigwedge_{\ast}^{0}\oplus (\bigwedge_{\ast}^{1})^{2}\to \bigwedge_{\ast}^{0}\oplus (\bigwedge_{\ast}^{1})^{2}$.
It sends $(z,\theta^{+},\theta^{-})\mapsto (\tilde{z},\tilde{\theta}^{+},\tilde{\theta}^{-})=(H^{0}(z,\theta^{+},\theta^{-}),H^{+}(z,\theta^{+},\theta^{-}),H^{-}(z,\theta^{+},\theta^{-}))$,
where $H^{0}$ is an even superanalytic function and $H^{\pm}$ are odd superanalytic functions in $(1,2)$-variables.
We consider two odd derivations $D^{\pm}=\frac{\del}{\del\theta^{\pm}}+\theta^{\mp}\frac{\del}{\del z}$,
which transform under $H$ as
\begin{equation}
	D^{\pm}=(D^{\pm}\tilde{\theta}^{\pm})\tilde{D}^{\pm}+(D^{\pm}\tilde{z}-\tilde{\theta}^{\mp}D^{\pm}\tilde{\theta}^{\pm})\frac{\del}{\del \tilde{z}}+(D^{\pm}\tilde{\theta}^{\mp})\frac{\del}{\del \tilde{\theta}^{\mp}}.
\end{equation}
The superanalytic map $H$ is said to be superconformal if $D^{\pm}$ transform homogeneously of degree one,
{\it i.e.}, if the following relations hold:
\begin{equation}
	D^{\pm}\tilde{z}=\tilde{\theta}^{\mp}D^{\pm}\tilde{\theta}^{\pm},\ \ \ D^{\pm}\tilde{\theta}^{\mp}=0.
\end{equation}
As we have done in the $\mcal{N}=1$ case, we consider $H$ in Laurent expended form
and treat variables $z,\theta^{+},\theta^{-}$ as formal ones.
Let us introduce following vector fields
\begin{align}
	\mcal{L}_{j}^{(2)}&=-\left(z^{j+1}\frac{\del}{\del z}+\left(\frac{j+1}{2}\right)z^{j}\left(\theta^{+}\frac{\del}{\del \theta^{+}}+\theta^{-}\frac{\del}{\del \theta^{-}}\right)\right), \\
	\mcal{J}_{j}&=-z^{j}\left(\theta^{+}\frac{\del}{\del\theta^{+}}-\theta^{-}\frac{\del}{\del\theta^{-}}\right),\\
	\mcal{G}_{j}^{\pm}&=-\left(z^{j+1}\left(\frac{\del}{\del \theta^{\pm}}-\theta^{\mp}\frac{\del}{\del z}\right)+(j+1)z^{j}\theta^{+}\theta^{-}\frac{\del}{\del\theta^{\pm}}\right),
\end{align}
for $j\in\mbb{Z}$, which form a basis of an infinite dimensional Lie superalgebra denoted by $\mfrak{ns}_{2}^{0}$.
Similarly to the $\mcal{N}=1$ case, the subscript $2$ and the superscript $0$ express $\mcal{N}=2$ and the central charge $c=0$, respectively.

For a sequence $A_{j}, B_{j}\in \bigwedge_{\ast}^{0}$, $M_{j+\frac{1}{2}}^{\pm}\in \bigwedge_{\ast}^{1}$,
we define an operator
\begin{equation}
	E_{A,B,M^{\pm}}=\exp\left(-\sum_{j\in\mbb{Z}_{<0}}\left(A_{j}\mcal{L}_{j}^{(2)}+B_{j}\mcal{J}_{j}+M_{j+\frac{1}{2}}^{+}\mcal{G}_{j+\frac{1}{2}}^{+}+M_{j+\frac{1}{2}}^{-}\mcal{G}_{j+\frac{1}{2}}^{-}\right)\right).
\end{equation}
The following theorem is proved in the same way as for Theorem \ref{thm:exp_superconformal_N=1}.
\begin{thm}
The superanalytic map given by
\begin{equation}
	H(z,\theta^{+},\theta^{-})=(\tilde{z},\tilde{\theta}^{+},\tilde{\theta}^{-})=E_{A,B,M^{\pm}}\cdot (z,\theta^{+},\theta^{-})
\end{equation}
is superconformal, {\it i.e.}, it satisfies $D^{\pm}\tilde{z}=\tilde{\theta}^{\mp}D^{\pm}\tilde{\theta}^{\pm}$ and $D^{\pm}\tilde{\theta}^{\mp}=0$.
\end{thm}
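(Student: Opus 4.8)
The plan is to imitate the proof of Theorem~\ref{thm:exp_superconformal_N=1} while carrying the two odd derivations $D^{+}$ and $D^{-}$ through the computation in parallel. Writing
\begin{equation}
	T=-\sum_{j\in\mbb{Z}_{<0}}\left(A_{j}\mcal{L}_{j}^{(2)}+B_{j}\mcal{J}_{j}+M_{j+\frac{1}{2}}^{+}\mcal{G}_{j+\frac{1}{2}}^{+}+M_{j+\frac{1}{2}}^{-}\mcal{G}_{j+\frac{1}{2}}^{-}\right),
\end{equation}
so that $E_{A,B,M^{\pm}}=e^{T}$ and $(\tilde{z},\tilde{\theta}^{+},\tilde{\theta}^{-})=(e^{T}z,e^{T}\theta^{+},e^{T}\theta^{-})$, the whole statement reduces to the two operator identities
\begin{equation}
	\label{eq:N2_plan_key}
	[D^{\pm},T]=h^{\pm}D^{\pm}
\end{equation}
for some \emph{even} superanalytic functions $h^{\pm}$ acting by multiplication. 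The essential point is that the right-hand sides must be proportional to $D^{\pm}$ alone, with no admixture of $D^{\mp}$ nor of the conjugate derivation $\bar{D}^{\pm}:=\frac{\del}{\del\theta^{\pm}}-\theta^{\mp}\frac{\del}{\del z}$.

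The verification of \eqref{eq:N2_plan_key} is the main work, and the place where the $\mcal{N}=2$ case genuinely goes beyond the $\mcal{N}=1$ case. Since $T$ is even, it suffices to check that each generator $\mcal{L}_{j}^{(2)}$, $\mcal{J}_{j}$, $\mcal{G}_{j+\frac{1}{2}}^{\pm}$ brackets with $D^{\pm}$ into a left multiple of $D^{\pm}$, i.e.\ that each is an infinitesimal superconformal vector field. For the even generators $\mcal{L}_{j}^{(2)}$, $\mcal{J}_{j}$ and for the ``diagonal'' odd brackets $[D^{\pm},\mcal{G}_{j+\frac{1}{2}}^{\pm}]$ this is a direct computation in which the surviving terms reassemble into a multiple of $D^{\pm}$ by nilpotency; for instance a term of the shape $\theta^{+}\theta^{-}\frac{\del}{\del\theta^{\pm}}$ equals $\theta^{+}\theta^{-}D^{\pm}$ because the extra $\theta^{\mp}$ annihilates the $\theta^{\mp}\frac{\del}{\del z}$ part. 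The delicate contributions are the ``off-diagonal'' brackets $[D^{\pm},\mcal{G}_{j+\frac{1}{2}}^{\mp}]$: the leading piece $z^{j+1}\bar{D}^{\mp}$ of $\mcal{G}_{j+\frac{1}{2}}^{\mp}$ produces a term proportional to the conjugate derivation $\bar{D}^{\mp}$, which is \emph{not} of the admissible form, and it is precisely the correction term $(j+1)z^{j}\theta^{+}\theta^{-}\frac{\del}{\del\theta^{\mp}}$ that is engineered to cancel it and restore homogeneity in $D^{\pm}$. I expect checking this cancellation, with the correct coefficients and signs, to be the main obstacle; it is exactly the assertion that these vector fields close into the superconformal algebra $\mfrak{ns}_{2}^{0}$. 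The odd coefficients $M_{j+\frac{1}{2}}^{\pm}$ are carried through $D^{\pm}$ with the appropriate Koszul signs, and collecting everything yields \eqref{eq:N2_plan_key} with an even $h^{\pm}$.

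Granting \eqref{eq:N2_plan_key}, the remainder is formal and copies the $\mcal{N}=1$ argument verbatim. From $[D^{\pm},T]=h^{\pm}D^{\pm}$ one gets $D^{\pm}T=(T+h^{\pm})D^{\pm}$, hence $D^{\pm}T^{n}=(T+h^{\pm})^{n}D^{\pm}$ by induction, and therefore
\begin{equation}
	D^{\pm}e^{T}=e^{T+h^{\pm}}D^{\pm}.
\end{equation}
Combining this with the elementary identities $D^{\pm}z=\theta^{\mp}$, $D^{\pm}\theta^{\pm}=1$ and $D^{\pm}\theta^{\mp}=0$ gives at once
\begin{equation}
	D^{\pm}\tilde{\theta}^{\mp}=e^{T+h^{\pm}}(D^{\pm}\theta^{\mp})=0,
\end{equation}
which is the second superconformal condition. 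For the first, the same relation yields $D^{\pm}\tilde{z}=e^{T+h^{\pm}}\theta^{\mp}$ and $D^{\pm}\tilde{\theta}^{\pm}=e^{T+h^{\pm}}1$. Finally, because $T$ is an even derivation and $h^{\pm}$ acts by an even multiplication, one has the twisted Leibniz product formula
\begin{equation}
	e^{T+h^{\pm}}\theta^{\mp}=(e^{T}\theta^{\mp})(e^{T+h^{\pm}}1),
\end{equation}
proved, exactly as in the $\mcal{N}=1$ case, by noting that both sides solve the same first-order differential equation in the exponentiation parameter with the same initial value. Substituting, $D^{\pm}\tilde{z}=(e^{T}\theta^{\mp})(e^{T+h^{\pm}}1)=\tilde{\theta}^{\mp}D^{\pm}\tilde{\theta}^{\pm}$, which is the first condition and finishes the proof.
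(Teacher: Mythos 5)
Your architecture is the same as the paper's: the paper's entire proof of this theorem is the sentence that it ``is proved in the same way as for Theorem \ref{thm:exp_superconformal_N=1},'' and your reduction to $[D^{\pm},T]=h^{\pm}D^{\pm}$, followed by the exponentiation $D^{\pm}e^{T}=e^{T+h^{\pm}}D^{\pm}$, the twisted Leibniz identity $e^{T+h^{\pm}}\theta^{\mp}=(e^{T}\theta^{\mp})(e^{T+h^{\pm}}1)$, and the substitutions $D^{\pm}z=\theta^{\mp}$, $D^{\pm}\theta^{\pm}=1$, $D^{\pm}\theta^{\mp}=0$, is precisely that argument. Those formal steps, as well as your treatment of the even generators and of the diagonal brackets (including the identity $\theta^{+}\theta^{-}\frac{\del}{\del\theta^{\pm}}=\theta^{+}\theta^{-}D^{\pm}$), are correct.

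The gap is the one step you deferred, and it does not resolve the way you predict. Writing $\bar{D}^{-}=\frac{\del}{\del\theta^{-}}-\theta^{+}\frac{\del}{\del z}$, a direct computation (using $\{D^{+},\bar{D}^{-}\}=0$ and $\{D^{+},\frac{\del}{\del\theta^{-}}\}=\frac{\del}{\del z}$) gives
\begin{equation}
	\{D^{+},z^{j+1}\bar{D}^{-}\}=(j+1)z^{j}\theta^{-}\bar{D}^{-},\qquad
	\{D^{+},(j+1)z^{j}\theta^{+}\theta^{-}\tfrac{\del}{\del\theta^{-}}\}=(j+1)z^{j}\theta^{-}\bar{D}^{-},
\end{equation}
so for the generator $\mcal{G}^{-}_{j+\frac{1}{2}}=-\bigl(z^{j+1}\bar{D}^{-}+(j+1)z^{j}\theta^{+}\theta^{-}\tfrac{\del}{\del\theta^{-}}\bigr)$ exactly as printed in the paper, the two contributions \emph{add} instead of cancel:
\begin{equation}
	\{D^{+},\mcal{G}^{-}_{j+\frac{1}{2}}\}=-2(j+1)z^{j}\theta^{-}\bar{D}^{-},
\end{equation}
which is not a function multiple of $D^{+}$ (apply both sides to $\theta^{-}$). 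Concretely, for $H=\exp(-M^{-}_{j+\frac{1}{2}}\mcal{G}^{-}_{j+\frac{1}{2}})$ with $j\le-2$ one finds $D^{+}\tilde{\theta}^{-}=-2(j+1)z^{j}\theta^{-}M^{-}_{j+\frac{1}{2}}\neq 0$, so the claimed conclusion fails for the generators as literally stated. The cancellation you want does occur, but only when the quadratic correction in $\mcal{G}^{-}$ carries the opposite ordering: the superconformal generators must be
\begin{equation}
	\mcal{G}^{\pm}_{j+\frac{1}{2}}=-\left(z^{j+1}\left(\frac{\del}{\del\theta^{\pm}}-\theta^{\mp}\frac{\del}{\del z}\right)+(j+1)z^{j}\,\theta^{\pm}\theta^{\mp}\frac{\del}{\del\theta^{\pm}}\right),
\end{equation}
with $\theta^{\pm}\theta^{\mp}$, not a fixed $\theta^{+}\theta^{-}$ (the paper's formula is correct for $\mcal{G}^{+}$ but off by a sign for $\mcal{G}^{-}$). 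With this definition one gets $\{D^{\pm},\mcal{G}^{\mp}_{j+\frac{1}{2}}\}=0$ and $\{D^{\pm},\mcal{G}^{\pm}_{j+\frac{1}{2}}\}=-2(j+1)z^{j}\theta^{\mp}D^{\pm}$, every generator then brackets with $D^{\pm}$ into a left multiple of $D^{\pm}$, and the rest of your argument closes as written. So the plan is right, but the computation you flagged as ``the main obstacle'' is exactly where the content lies: asserting its outcome rather than performing it conceals both the substance of the proof and a sign inconsistency in the stated generators that has to be repaired for the statement to be true.
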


We denote the group generated by operators of the form $E_{A,B,M^{\pm}}$ for various coefficient data
$A, B ,M^{\pm}$ by $\mrm{SC}_{+}^{\mcal{N}=2}$.

\section{SLE on an $\mcal{N}=1$ super Riemann surface}
\label{sect:N=1}
In this section, we construct a generalization of SLE corresponding to the Neveu-Schwarz sector of the $\mcal{N}=1$ superconformal algebra $\mfrak{ns}_{1}$,
which is a central extension of $\mfrak{ns}_{1}^{0}$:
\begin{equation}
	\xymatrix{
		0 \ar[r] & \mbb{C}C \ar[r] & \mfrak{ns}_{1} \ar[r]^{\pi} & \mfrak{ns}_{1}^{0} \ar[r] & 0.
	}
\end{equation}
It is spanned by even generators $L_{n}$ ($n\in\mbb{Z}$), odd generators $G_{n+\frac{1}{2}}$ ($n\in\mbb{Z}$) and a central element $C$ with relations
\begin{align}
	[L_{m},L_{n}]&=(m-n)L_{m+n}+\frac{m^{3}-m}{12}\delta_{m+n,0}C, \\
	[G_{m+\frac{1}{2}},L_{n}]&=\left(m-\frac{n-1}{2}\right)G_{m+n+\frac{1}{2}}, \\
	[G_{m+\frac{1}{2}},G_{n-\frac{1}{2}}]&=2L_{m+n}+\frac{m^{2}+m}{3}\delta_{m+n,0}C.
\end{align}
The projection $\pi$ maps $L_{n}\mapsto \mcal{L}_{n}^{(1)}$ and $G_{n+\frac{1}{2}}\mapsto \mcal{G}_{n+\frac{1}{2}}$.

Let $\ket{c,h}$ be the highest weight vector of central charge $c$ and conformal weight $h$
that is annihilated by $L_{n}$ and $G_{n-\frac{1}{2}}$ for $n\in\mbb{Z}_{>0}$.
The Verma module induced from $\mbb{C}\ket{c,h}$ and its irreducible quotient are
denoted by $M(c,h)$ and $L(c,h)$, respectively.
The formal completion of the irreducible representation $L(c,h)$ is defined by
$\overline{L(c,h)}=\prod_{n\in\frac{1}{2}\mbb{Z}_{\ge 0}}L(c,h)_{h+n}$, where
$L(c,h)=\bigoplus_{n\in\frac{1}{2}\mbb{Z}_{\ge 0}}L(c,h)_{h+n}$ is the direct sum decomposition into
eigenspaces of $L_{0}$.
For an element $E_{A,M}\in\mrm{SC}_{+}^{\mcal{N}=1}$, we define an operator
\begin{equation}
	Q(E_{A,M})=\exp\left(-\sum_{j\in\mbb{Z}_{<0}}\left(A_{j}L_{j}+M_{j+\frac{1}{2}}G_{j+\frac{1}{2}}\right)\right),
\end{equation}
then $Q$ is a representation of $\mrm{SC}_{+}^{\mcal{N}=1}$ on $\overline{L(c,h)}\otimes\bigwedge_{\ast}$.

To construct a generalization of SLE, we need to fix a singular vector in $M(c,h)$.
We focus on a vector of the form
\begin{equation}
	\label{eq:singular_vector_N=1}
	\left(L_{-2}+a L_{-1}^{2}+b G_{-\frac{3}{2}}G_{-\frac{1}{2}}\right)\ket{c,h},
\end{equation}
which is verified to be a singular vector if $c=\frac{3}{2}\left(1-\frac{16}{3}h\right)$, $a=-b=-\frac{3}{4h}$.

Correspondingly to this singular vector, we consider a random process $H_{t}$ on $\mrm{SC}_{+}^{\mcal{N}=1}$ that satisfies
\begin{align}
	H_{t}^{-1}dH_{t}
		=&\left(-2\mcal{L}_{-2}^{1}+\frac{\kappa}{2}\left(\mcal{L}_{-1}^{(1)}\right)^{2}
		-\frac{\kappa}{4}\zeta_{2}\zeta_{1}\left(\mcal{G}_{-\frac{3}{2}}\mcal{G}_{-\frac{1}{2}}-\mcal{G}_{-\frac{1}{2}}\mcal{G}_{-\frac{3}{2}}\right)\right)dt \\
		&+\sqrt{\kappa}\mcal{L}_{-1}^{(1)}dB_{t}^{1}+\sqrt{\frac{\kappa}{2}}\left(\zeta_{1}\mcal{G}_{-\frac{1}{2}}+\zeta_{2}\mcal{G}_{-\frac{3}{2}}\right)dB_{t}^{2}, \notag
\end{align}
with the initial condition $H_{0}=\mrm{Id}$,
where $B_{t}^{1}$ and $B_{t}^{2}$ are mutually independent standard Brownian motions that start from the origin,
and $\zeta_{1}$ and $\zeta_{2}$ are two of generators of the coefficient algebra $\bigwedge_{\ast}$.
Then its value $H_{t}(z,\theta)=(H^{0}_{t}(z,\theta),H^{1}_{t}(z,\theta))$ satisfies
\begin{align}
	\label{eq:SLE_N=1_0}
	dH_{t}^{0}(z,\theta)
	=&\frac{2dt}{H^{0}(z,\theta)} - \sqrt{\kappa}dB_{t}^{1}
		+\sqrt{\frac{\kappa}{2}}\left(\zeta_{1}H_{t}^{1}(z,\theta)+\zeta_{2}\frac{H_{t}^{1}(z,\theta)}{H_{t}^{0}(z,\theta)}\right)dB_{t}^{2}, \\
	\label{eq:SLE_N=1_1}
	dH_{t}^{1}(z,\theta)
	=&\left(-\frac{H_{t}^{1}(z,\theta)}{H_{t}^{0}(z,\theta)}+\frac{\kappa}{4}\zeta_{2}\zeta_{1}\frac{H_{t}^{1}(z,\theta)}{H_{t}^{0}(z,\theta)^{2}}\right)dt
		+\sqrt{\frac{\kappa}{2}}\left(-\zeta_{1}-\frac{\zeta_{2}}{H_{t}^{0}(z,\theta)}\right)dB_{t}^{2},
\end{align}
with the initial conditions $H_{0}^{0}(z,\theta)=z$, $H_{0}^{1}(z,\theta)=\theta$,
the first of which can be regarded as a generalization of SLE in Eq.(\ref{eq:SLE_alter}).

To obtain local martingales associated with the stochastic differential equations (\ref{eq:SLE_N=1_0}) and (\ref{eq:SLE_N=1_1})
from a representation of $\mfrak{ns}_{2}$, we consider a random process $Q(H_{t})$ that
takes as its value operators on $\overline{L(c,h)}\otimes \bigwedge_{\ast}$.
It satisfies the stochastic differential equation
\begin{align}
	Q(H_{t})^{-1}dQ(H_{t})
		=&\left(-2L_{-2}+\frac{\kappa}{2}\left(L_{-1}\right)^{2}
		-\frac{\kappa}{4}\zeta_{2}\zeta_{1}\left(G_{-\frac{3}{2}}G_{-\frac{1}{2}}-G_{-\frac{1}{2}}G_{-\frac{3}{2}}\right)\right)dt \\
		&+\sqrt{\kappa}L_{-1}dB_{t}^{1}+\sqrt{\frac{\kappa}{2}}\left(\zeta_{1}G_{-\frac{1}{2}}+\zeta_{2}G_{-\frac{3}{2}}\right)dB_{t}^{2}, \notag
\end{align}
with the initial condition $Q(H_{0})=\mrm{Id}$.
Due to the singular vector in Eq.(\ref{eq:singular_vector_N=1}), the following quantity is a local martingale
with values in $\overline{L(c,h)}\otimes \bigwedge_{\ast>2}$:
\begin{equation}
	\int d\zeta_{1}d\zeta_{2}Q(H_{t})\ket{c_{\kappa},h_{\kappa}}\otimes (1+\zeta_{2}\zeta_{1}),
\end{equation}
where $c_{\kappa}=\frac{3}{2}-\frac{6(4-\kappa)}{\kappa}, h_{\kappa}=\frac{12-3\kappa}{4\kappa}$.
Here the integral in Grassmann variables $\int d\zeta_{1}d\zeta_{2}$ defines a map
$\bigwedge_{\ast}\to\bigwedge_{\ast>2}$, where $\bigwedge_{\ast>2}$ is the Grassmann algebra
generated by $\zeta_{3},\zeta_{4},\cdots$.

\section{SLE on an $\mcal{N}=2$ super Riemann surface}
\label{sect:N=2}
In this section, we construct a generalization of SLE corresponding to the Neveu-Schwarz sector of the $\mcal{N}=2$
superconformal algebra $\mfrak{ns}_{2}$.
It is a central extension
\begin{equation}
	\xymatrix{
		0 \ar[r] & \mbb{C}C \ar[r] & \mfrak{ns}_{2} \ar[r]^{\pi} & \mfrak{ns}_{2}^{0} \ar[r] & 0
	}
\end{equation}
of $\mfrak{ns}_{2}^{0}$ and spanned by even generators $L_{n}$, $J_{n}$, odd generators $G_{n+\frac{1}{2}}^{\pm}$ for $n\in\mbb{Z}$
and a central element $C$.
The Lie bracket among them is given by
\begin{align}
	[L_{m},L_{n}]&=(m-n)L_{m+n}+\frac{m^{3}-m}{12}\delta_{m+n,0}C, \\
	[J_{m},J_{n}]&=\frac{m}{3}\delta_{m+n,0}C, \\
	[L_{m},J_{n}]&=-nJ_{m+n}, \\
	[L_{m},G^{\pm}_{n+\frac{1}{2}}]&=\left(\frac{m-1}{2}-n\right)G^{\pm}_{m+n+\frac{1}{2}}, \\
	[J_{m},G^{\pm}_{n+\frac{1}{2}}]&=\pm G^{\pm}_{m+n+\frac{1}{2}}, \\
	[G^{\pm}_{m+\frac{1}{2}},G^{\pm}_{n+\frac{1}{2}}]&=0, \\
	[G^{+}_{m+\frac{1}{2}},G^{-}_{n+\frac{1}{2}}]&=2L_{m+n}+(m-n+1)J_{m+n}+\frac{m^{2}+m}{3}\delta_{m+n,0}C.
\end{align}
The projection maps $L_{n}\mapsto \mcal{L}_{n}^{(2)}$, $J_{n}\mapsto \mcal{J}_{n}$,
$G^{\pm}_{n+\frac{1}{2}}\mapsto \mcal{G}^{\pm}_{n+\frac{1}{2}}$.

Let $\ket{c,h,\alpha}$ be a highest weight vector that is a simultaneous eigenvector of
$C$, $L_{0}$ and $J_{0}$, corresponding to eigenvalues $c$, $h$, and $\alpha$, respectively,
and annihilated by $L_{n}$, $J_{n}$ and $G^{\pm}_{n-\frac{1}{2}}$ for $n\in\mbb{Z}_{>0}$.
The Verma module induced from $\mbb{C}\ket{c,h,\alpha}$ is denoted by $M(c,h,\alpha)$
and its irreducible quotient is denoted by $L(c,h,\alpha)$.
The formal completion $\overline{L(c,h,\alpha)}$ is defined in terms of the eigenspace decomposition
with respect to $L_{0}$.

For an element $E_{A,B,M^{\pm}}\in \mrm{SC}_{+}^{\mcal{N}=2}$, we define
\begin{equation}
	Q(E_{A,B,M^{\pm}})=\exp\left(-\sum_{j\in\mbb{Z}_{<0}}\left(A_{j}L_{j}+B_{j}J_{j}+M_{j+\frac{1}{2}}^{+}G_{j+\frac{1}{2}}^{+}+M_{j+\frac{1}{2}}^{-}G_{j+\frac{1}{2}}^{-}\right)\right)
\end{equation}
so that $Q$ becomes a representation of $\mrm{SC}_{+}^{\mcal{N}=2}$ on $\overline{L(c,h,\alpha)}\otimes \bigwedge_{\ast}$.

We consider a vector in the Verma module $M(c,h,\alpha)$ of the form
\begin{equation}
	\label{eq:singular_vector_N=2}
	\left(L_{-1}+\frac{1}{\alpha-1}G^{+}_{-\frac{1}{2}}G^{-}_{-\frac{1}{2}}+\frac{\alpha+1}{t}J_{-1}\right)\ket{c,h,\alpha}.
\end{equation}
It has been shown in \cite{Doerrzapf1995} that this is a singular vector if
\begin{equation}
	c=c(t)=3-3t,\ \ \ h=h(t,\alpha)=-\frac{1}{2}+\frac{3}{8t}-\frac{4\alpha^{2}-1}{t}.
\end{equation}
Correspondingly, we consider a random process $H_{t}$ on $\mrm{SC}_{+}^{\mcal{N}=2}$ that satisfies
\begin{align}
	H_{t}^{-1}dH_{t}=
	&\left(\mcal{L}^{(2)}_{-1}+a\mcal{J}_{-1}+\frac{\kappa}{2}\zeta_{1}\zeta_{2}\left(\mcal{G}^{+}_{-\frac{1}{2}}\mcal{G}^{-}_{-\frac{1}{2}}-\mcal{G}^{-}_{-\frac{1}{2}}\mcal{G}^{+}_{-\frac{1}{2}}\right)\right)dt \\
	&+\sqrt{\kappa}\left(\zeta_{1}\mcal{G}^{+}_{-\frac{1}{2}}+\zeta_{2}\mcal{G}^{-}_{-\frac{1}{2}}\right)dB_{t}, \notag
\end{align}
with the initial condition $H_{0}=\mrm{Id}$,
where $B_{t}$ is the standard Brownian motion starting from the origin and parameters are set as $a=\frac{(\alpha-1)^{2}}{t\alpha}$ and $\kappa=\frac{1}{\alpha}$.
The stochastic differential equation for the value
\begin{equation}
	H_{t}(z,\theta^{+},\theta^{-})=(H_{t}^{0} (z,\theta^{+},\theta^{-}),H_{t}^{+} (z,\theta^{+},\theta^{-}),H_{t}^{-} (z,\theta^{+},\theta^{-}))
\end{equation}
can be also written down as
\begin{align}
	\label{eq:SLE_N=2_0}
	dH_{t}^{0}(z,\theta^{+},\theta^{-})&=-dt +\sqrt{\kappa}(\zeta_{1}H_{t}^{+}(z,\theta^{+},\theta^{-})+\zeta_{2}H_{t}^{-}(z,\theta^{+},\theta^{-}))dB_{t},\\
	\label{eq:SLE_N=2_+}
	dH_{t}^{+}(z,\theta^{+},\theta^{-})&=-a\frac{H_{t}^{+}(z,\theta^{+},\theta^{-})}{H_{t}^{0}(z,\theta^{+},\theta^{-})}dt -\sqrt{\kappa}\zeta_{2}dB_{t},\\
	\label{eq:SLE_N=2_-}
	dH_{t}^{-}(z,\theta^{+},\theta^{-})&=a\frac{H_{t}^{-}(z,\theta^{+},\theta^{-})}{H_{t}^{0}(z,\theta^{+},\theta^{-})}dt -\sqrt{\kappa}\zeta_{1}dB_{t},
\end{align}
with the initial conditions $H_{0}^{0}(z,\theta^{+},\theta^{-})=z$, $H_{0}^{+}(z,\theta^{+},\theta^{-})=\theta^{+}$,
$H_{0}^{-}(z,\theta^{+},\theta^{-})=\theta^{-}$.

To derive a vector-valued local martingale associated with the stochastic differential equations
(\ref{eq:SLE_N=2_0}), (\ref{eq:SLE_N=2_+}) and (\ref{eq:SLE_N=2_-}),
we consider the random process $Q(H_{t})$,
of which value are operators on $\overline{L(c,h,\alpha)}\otimes \bigwedge_{\ast}$.
It satisfies the following stochastic differential equation
\begin{align}
	Q(H_{t})^{-1}dQ(H_{t})=
	&\left(L_{-1}+aJ_{-1}+\frac{\kappa}{2}\zeta_{1}\zeta_{2}(G^{+}_{-\frac{1}{2}}G^{-}_{-\frac{1}{2}}-G^{-}_{-\frac{1}{2}}G^{+}_{-\frac{1}{2}})\right)dt \\
	&+\sqrt{\kappa}(\zeta_{1}G^{+}_{-\frac{1}{2}}+\zeta_{2}G^{-}_{-\frac{1}{2}})dB_{t}, \notag
\end{align}
with the initial condition $Q(H_{0})=\mrm{Id}$.
Then it can be verified that
\begin{equation}
	\int d\zeta_{2}d\zeta_{1}Q(H_{t})\ket{c(t),h(t,\alpha),\alpha}\otimes (1+\zeta_{1}\zeta_{2})	
\end{equation}
is a local martingale
because of the singular vector in Eq.(\ref{eq:singular_vector_N=2}).

\section{Discussion}
\label{sect:discussion}
We have proposed generalizations of SLE that are associated with
superconformal algebras of $\mcal{N}=1$ (Eq.(\ref{eq:SLE_N=1_0}), (\ref{eq:SLE_N=1_1}))
and $\mcal{N}=2$ (Eq.(\ref{eq:SLE_N=2_0}), (\ref{eq:SLE_N=2_+}), (\ref{eq:SLE_N=2_-})).
These stochastic differential equations are derived from random processes
on infinite dimensional Lie groups of superconformal maps.
Such construction allows one to obtain local martingales
associated with the solutions from a representation of
a superconformal algebra after certain integral over Grassmann variables,
which was also presented in this paper following the prescription used in our previous work \cite{SK2018b}.
Though our construction assumes a specific form of singular vectors,
it can be extended to another singular vector as long as it is obtained
by applying an operator, at most quadratic in generators, to a highest weight vector.

A generalization of SLE corresponding to the $\mcal{N}=1$ superconformal algebra
has been already considered in \cite{Rasmussen2004a} for the Neveu-Schwarz case
and in \cite{NagiRasmussen2005} for the Ramond case.
Compared to these works, in which the authors focused on an odd singular vector,
this paper treats an even singular vector,
which requires one to integrate out some Grassmann variables to obtain local martingales.
Corresponding to this difference in approach, our stochastic differential equations (\ref{eq:SLE_N=1_0}) and (\ref{eq:SLE_N=1_1})
are different from ones discovered in \cite{Rasmussen2004a},
but ours seems to be a more natural candidate for a generalization of SLE with supersymmetry.

There are several future directions concerning SLE associated with superconformal algebras.
Though, in the present paper, we focused on the Neveu-Schwarz sector of superconformal algebras,
our construction will also be applied to the Ramond sector.
Probably the most important one of future directions is to construct a multiple version of SLE with supersymmetry presented in this paper,
which will allow one to understand more deeply SLE with supersymmetry in connection with CFT.
Such a work will be supported by better understanding of the partition function of SLE with supersymmetry.
Related to this, we also mention SLE as a random process on the moduli space of Riemann surfaces
established in \cite{FriedrichKalkkinen2004,Kontsevich2003}.
We suspect that SLE with supersymmetry can also be regarded as a random process on
the moduli space of super Riemann surfaces.
Such an understanding of SLE with supersymmetry will allow one to realize it
on more general super Riemann surfaces than one of genus 0.

\section*{Ackowledgement}
This work was supported by a Grant-in-Aid for JSPS Fellows (Grant No. 17J09658).












\addcontentsline{toc}{chapter}{Bibliography}
\bibliographystyle{alpha}
\bibliography{sle_cft}



\end{document}